\newtheorem{corollary}{\bf Corollary}
\newtheorem{theorem}{\bf Theorem}
\newtheorem{lemma}{\bf Lemma}
\newtheorem{definition}{\bf Definition}
\begin{document}
\IEEEoverridecommandlockouts
\title{Breaking the Economic Barrier of Caching in Cellular Networks: Incentives and Contracts \vspace{-0.3cm}}

\author{{Kenza Hamidouche$^{1,2}$}, Walid Saad$^{2}$, and M\'erouane Debbah$^{1,3}$\vspace*{0em}\\
\authorblockA{\small $^{1}$ CentraleSup\'elec, Universit\'e Paris-Saclay, Gif-sur-Yvette,France, Email: \url{kenza.hamidouche@centralesupelec.fr}\\
$^{2}$Wireless@VT, Bradley Department of Electrical and Computer Engineering, Virginia Tech, Blacksburg, VA, USA, Email: \url{walids@vt.edu} \\
$^{3}$Mathematical and Algorithmic Sciences Lab, Huawei France R\&D, France, Email: \url{merouane.debbah@huawei.com}
\vspace{-0.5cm}}
    \thanks{This research was supported by ERC Starting Grant 305123 MORE (Advanced Mathematical Tools for Complex Network Engineering), the ANR project: WisePhy: S\'ecurit\'e pour les communications sans fil \`a la couche physique, the U.S. National Science Foundation under Grants CNS-1513697 and CNS-1460316.}%
  }

\maketitle

\begin{abstract}
In this paper, a novel approach for providing incentives for caching in small cell networks (SCNs) is proposed based on the economics framework of contract  theory. In this model, a mobile network operator (MNO) designs contracts that will be offered to a number of content providers (CPs) to motivate them to cache their content at the MNO's small base stations (SBSs). A practical model in which information about the traffic generated by the CPs' users is not known to the MNO is considered. Under such asymmetric information, the incentive contract between the MNO and each CP is properly designed so as to determine the amount of allocated storage to the CP and the charged price by the MNO. The contracts are derived by the MNO in a way to maximize the global benefit of the CPs and prevent them from using their private information to manipulate the outcome of the caching process. For this interdependent contract model, the closed-form expressions of the price and the allocated storage space to each CP are derived. This proposed mechanism is shown to satisfy the sufficient and necessary conditions for the feasibility of a contract. Moreover, it is shown that the proposed pricing model is budget balanced, enabling the MNO to cover all the caching expenses via the prices charged to the CPs. Simulation results show that none of the CPs will have an incentive to choose a contract designed for CPs with different traffic loads.
\end{abstract}
\section{Introduction}
The capacity limitations of the backhaul links that connect small base stations (SBSs) to the core network in a small cell network (SCN), make it difficult
for the mobile network operators (MNOs) and content providers (CPs) to ensure the required data rates for emerging, bandwidth intensive users' applications especially during peak hours \cite{andrews2014will}. To overcome these backhaul limitations and meet users' requirements in terms of quality-of-service (QoS), distributed caching at the network edge has been recently proposed as a new promising solution \cite{poularakisexploiting,blaszczyszyn2014optimal,blasco2014content,ji2014fundamental}. 

Caching typically relies on storing the most popular files at the levels of SBSs and devices, to reduce backhaul traffic \cite{poularakisexploiting}. To successfully deploy SCN caching solutions, MNOs require the cooperation of the CPs by sharing their content and providing this content's global \cite{paschos2016wireless}. However, although CPs can improve the QoS of their users by caching, they might be reluctant to share their content with the MNOs. This can be due to reasons such as privacy since, once the content cached at the SBSs, the MNOs can get access to all the CP's files as well as the traffic dynamics of the users subscribed to the CP \cite{paschos2016wireless}. Thus, the MNOs must provide incentives to the CPs to share their data and cache it at the SBSs by introducing suitable economic arrangements that can be beneficial for both the MNO and the CPs \cite{paschos2016wireless,li2016pricing}.%

Most of the existing literature on caching \cite{poularakisexploiting,blaszczyszyn2014optimal,blasco2014content,ji2014fundamental} has focused on determining the optimal caching policy under various network scenarios. 
In \cite{poularakisexploiting}, the authors proposed a caching policy that optimizes the overall energy consumption of the SBSs while accounting for the multicast opportunities. The authors in \cite{blaszczyszyn2014optimal} proposed a caching policy that accounts for the geographical position of the SBSs and users. 
The work in \cite{blasco2014content} formulated the cache placement problem as a multi armed-bandit problem in which the SBSs do not know the popularity of the files. 
The authors in \cite{ji2014fundamental} proposed a coded caching strategy for wireless device-to-device (D2D) networks. 

Remarkably, none of these works have addressed the economic aspect of caching. Recently, the works in \cite{li2016pricing} and \cite{chen2016caching} provides some insights on the economics of caching using game-theoretic solutions. However, these works typically assume that the players are honest and have perfect knowledge of the cellular network information, which is not a reasonable assumption in practice. In fact, the CPs can modify the outcome of the incentive mechanism in their favor by attempting to manipulate the MNO's agreements. For example, some CPs can declare inaccurate information about their traffic load so as to mislead the MNO into charging them much lower prices. Such possibilities can therefore incentivize some of the CPs to not share their truthful private information in which case the results in \cite{li2016pricing} and \cite{chen2016caching} no longer hold.

The main contribution of this paper is to introduce a novel incentive mechanism for facilitating the deployment of caching in SCNs. We consider a model in which an MNO proposes agreements to the CPs to incite them to cache their content. We consider a practical scenario with \emph{asymmetric information} within which the CPs can be of different, private types that are unknown to the MNO. This private information pertains to the level of generated CP traffic which corresponds to the popularity of its content. The proposed approach, based on contract theory \cite{borgers2015introduction}, allows the MNO to define a contract for each CP in the presence of asymmetric information, by fixing a price for the allocated storage space to the CP. The goal of the MNO is to maximize the global reward of the CPs and cover the expenses of the caching process. Unlike classical contract-theoretic models \cite{gao2011spectrum,duan2014cooperative,zhang2015contract}, we show that the proposed model exhibits strategic interdependence between the CPs. Consequently, for this interdependent contract model, we derive the closed-form expression of the price and prove that the resulting mechanism satisfies the feasibility constraints of a contract. The designed mechanism is then shown to be budget balanced as it allows the MNO to offload its backhaul while covering the cost of caching the CPs' content. Simulation results show the performance advantage of the proposed scheme as well as its ability to incite truthfulness on the CPs.

\vspace{-0.2cm}
\section{System Model}
\label{model}
Consider a SCN composed of a set $\mathcal{M}$ of macro base stations (MBSs) and a set $\mathcal{S}$ of SBSs deployed by an MNO. The SBSs are connected to the MBSs via capacity-limited backhaul links and serve a set of users $\mathcal{U}$ that are subscribed to multiple CPs from a set $\mathcal{C}$. The SBSs can cache content to offload the MNO's backhaul. However, the MNOs needs the CPs' cooperation to cache their content. To motivate the CPs to participate in the caching, the MNO must offer contracts that present significantly improved QoS for the CPs' users. The contract terms between an MNO and a CP determine the price charged to the CP by the MNO and the amount of storage space offered.

We consider heterogeneous CPs with different traffic loads and content popularity. Based on this traffic load, the CPs have different incentive levels towards sharing their content with the MNOs. Naturally, there is an \emph{information asymmetry} between the MNO and the CPs. The CP is aware of its users' traffic as well as its preferences while the MNO may not have that information. Consequently, the CPs may have an incentive to not reveal their correct types so as to pay lower prices to the MNO. To overcome this challenge, the MNO must specify a suitable performance-reward \emph{bundle contract} $(\pi,\rho)$, where $\pi$ is the monetary reward that is paid by the CP, and $\rho$ is the storage space allocated to the CP. 

The goal of each CP $k\in\mathcal{C}$ is to maximize the performance of its users which depends on the amount of content that other CPs in the set $\mathcal{C}\setminus k$ will be willing to cache at the SBSs. In fact, the higher the traffic load of the CPs in $\mathcal{C}\setminus k$, more storage space is allocated to cache their content. Thus, less storage space can be made available for the CP $k$. This will negatively impact the data rate of CP $k$'s users as more requests need to be served via the capacity-limited backhaul. Moreover, by introducing caching, the traffic load of each SBS increases when caching highly popular files. Thus, more power is required at the SBS to serve all the requests for a highly popular file. Consequently, other CPs' users might experience large interference level from the SBSs that cache the files of CPs having a higher traffic load. 

\vspace{-0.1cm}
\subsection{Transmission Data Rate}
 The performance $r_k$ of a CP $k$ that results from caching its contents is measured by the transmission rate that its users experience from the serving MBS or SBS. When an SBS $i$ serves a user $j$, the data rate will be:

\small
\begin{equation}
\alpha_{ij}(\rho(\boldsymbol{\theta}),\boldsymbol{\theta})=\mathbb{E}_t\left[w_{ij}\log{\Big(1+\frac{p_{ij}(\rho,\theta_i)|h_{ij}|^2}{\sigma^2+I(\rho,\boldsymbol{\theta})}\Big)}\right],
\label{first_equ}
\end{equation}
\normalsize
where $I(\rho,\boldsymbol{\theta}) =\sum_{k\in\mathcal{S}\setminus i}{p_{kj}(\rho,\boldsymbol{\theta})|h_{kj}|^2}$ is the interference experienced by user $j$ from all the other SBSs. $w_{ij}$ is the channel bandwidth, $p_{ij}(\rho,\boldsymbol{\theta})$ is the transmit power from SBS $i$ to user $j$, $|h_{ij}|^2$ is the channel gain between SBS $i$ and user $j$, and $\sigma^2$ is the variance of the Gaussian noise. The vector $\boldsymbol{\theta}=[\theta_1,...,\theta_C]$ represents the traffic load of the CPs. Thus, the higher the traffic load of the CPs, the higher is the interference experienced by the users served from the neighboring SBSs. Since caching is done during off-peak periods, the transmit power is averaged over the considered time period.

If the file is cached at the SBSs, then the users will experience a relatively high data rate as the content is closer to them. However, if the data is not available at the associated SBS to serving a certain user, then the SBS must fetch the user content from the MBS over the capacity-limited and congested backhaul, yielding higher delays. The data rate of a user $j$ requesting file $f$ from its associated SBS $i$ can be given by:

\vspace{-0.5cm}
\begin{multline}
r_{ij}(\boldsymbol{\theta}) = (1-\beta_{if}(\rho(\boldsymbol{\theta}),\boldsymbol{\theta}))\min{\{\alpha_{ij}(\rho(\boldsymbol{\theta}),\boldsymbol{\theta}),\alpha^{\prime}_{mi}\}}\\+\beta_{if}(\rho(\boldsymbol{\theta}),\boldsymbol{\theta})\alpha_{ij}(\rho(\boldsymbol{\theta}),\boldsymbol{\theta}),
\end{multline}
\normalsize
 where $\boldsymbol{\beta}\in \{0,1\}^{S\times F_k}$ is the outcome of the MNO's storage allocation $\rho$ and $F_k$ is the cardinality of the set of files $\mathcal{F}_k$ provided by a CP $k$.  $\boldsymbol{\beta}$ depends on the caching policy $\rho$ and the traffic load of the CPs. For instance, when a CP $k$ has highly popular files or its willingness level to cache its content is high, it will impact the storage allocation to the other CPs $\mathcal{C}\setminus k$. The larger the number of files that CP $k$ wants to cache, the lower is the storage space that will be allocated to other CPs $\mathcal{C}\setminus k$ and vice versa. Each entry $\beta_{if}$ is a binary variable that equals $1$ if file $f$ is cached at SBS $i$ and $0$ otherwise. $\alpha^{\prime}_{mi}$ is the data rate from the MBS $m$ to SBS $i$ and is given by:

\begin{equation}
\alpha^{\prime}_{mi}=\mathbb{E}_t\left[w_{mi} \log{\Big(1+\frac{p_{mi}|h_{mi}|^2}{\sigma^2+I^{\prime}}\Big)}\right],
\label{second_equ}
\end{equation}
\normalsize
where $I^{\prime}=\sum_{l\in\mathcal{M}\setminus m}{p_{li}|h_{li}|^2}$ is the interference experienced by SBS $i$ from all the other transmitting MBSs.

Thus, the total rate of the users of CP $k$ can be given by:

\begin{equation}
r_k(\rho(\boldsymbol{\theta}),\theta_k) = \sum_{i\in\mathcal{S}}{\sum_{j\in\mathcal{U}_{ki}}{r_{ij}(\boldsymbol{\theta})}},
\end{equation}
\normalsize
where $\mathcal{U}_{ki}\subseteq\mathcal{U}_k$ is the set of users that request at least one file from CP $k$ by using SBS $i$, and $\mathcal{U}_k$ is the set of users requesting files of CP $k$. Here, we note that, in our model, each CP will have private information that is modeled as a \emph{type of CP} as discussed next.


\subsection{Content Provider Type}

We define the CP's type to be a representation of its traffic load and content popularity. In fact, when the MNO offers a contract, it must account for the generated traffic by the CPs. For example, by caching the contents of CPs with a high traffic, the MNO can serve more requests locally, thus decreasing its backhaul load considerably. Here, we consider that the number of CP types belongs to a discrete, finite space and grouped as follows:

\begin{definition}
\emph{There are $C$ CPs that generate traffic over an MNO's network. The CPs' types are sorted in an ascending order and classified into $K$ types $\theta_1,...,\theta_K$ with $K\leq C$. Each type includes properties such as the willingness to cache and the global popularity of the CPs files. The types are ordered as follows: $\theta_1<...<\theta_k<...<\theta_K,  ~~~k\in \{1,...,K\}.$}
\end{definition}

Since the types are not known by the MNO, the CPs can announce wrong information about their types so that they improve the performance of their users. For example, by claiming that its content popularity is higher than it actually is, a CP $k$ can mislead the MNO to allocate more storage space. In such a case, the CP can end up paying lower prices while also lowering the interference experienced by its users. Indeed, the truthful popularity information is necessary for the MNO to define the contracts that optimize the benefit of the CPs and cover the implementation costs of caching. Such cost includes the expenses of deploying storage devices and the required power to download the content and refresh the storage units. Here, our goal is to design contracts that incentivize the CPs to reveal the true values of their types $\boldsymbol{\theta}$ to the MNO. To this end, the contracts will be designed such that no CP can profit by choosing a contract that is designed for other types.


\vspace{-0.2cm}
\subsection{Content Provider Model}
The utility function of a CP $k$ of type $\theta_k$ that decides to cache a set of files $\mathcal{F}_k$ at the operator's network is:
\begin{equation}
u_{k}(\boldsymbol{\theta})=  r_k(\rho_k(\boldsymbol{\theta}),\theta_k)- \pi_{k}(\boldsymbol{\theta}) ,
\end{equation}
where $r_k(\rho)$ is defined in (4) and represents the valuation function regarding the rewards, which is a strictly increasing concave function of $\rho_k$, with $r(0)=0$ and $r^{\prime}(\rho_k)>0$, $r^{\prime\prime}(\rho)<0$ for all $\rho_k$. $\pi_k$ represents the price charged by the MNO for a storage allocation $\rho_k$.


\vspace{-0.1cm}
\subsection{Mobile Network Operators Model}
By caching the content of the CPs, the MNO will be able to reduce the traffic load on its backhaul. This benefit depends on the traffic load of the CPs as dictated by the popularity of their cached traffic. Thus, the MNO will generally prefer to cache the most popular files. By doing so, for the same storage capacity, the load can be reduced more for a CP whose files have a high popularity compared to other CPs. Thus, the cost of storage $c_s$ at the MNO can be given as a function of the traffic load of the considered CP as, $c(\theta)=\log{(1+\theta)}$. This storage cost function increases quickly up to a certain threshold and then increases slowly. It is suitable to model the storage cost as the MNO must allocate more storage space to serve a given traffic load, and this cost becomes insignificant when the traffic load increases as some requests become redundant.

A proper utility function for the MNO can be defined as the monetary reward that is charged to the CPs minus the cost of the allocated resources by the MNO, including storage.
\begin{equation}
v_{k}(\boldsymbol{\theta})= \pi_{k}(\boldsymbol{\theta})-c_k(\boldsymbol{\theta},\theta_k),
\end{equation}   
where $\pi_k$ is the price that the operator charges CPs of type $k$.
The total expected utility of the operator can be given by
\begin{equation}
v =\sum_{k\in\mathcal{C}} v_{k}(\boldsymbol{\theta}).
\end{equation}

In the considered model, the MNO is assumed to get the CP's types directly from the CPs. Based on this information, the goal of the MNO is to determine a contract for all possible CP types that maximizes the global benefit of the CPs. At the same time, the CP ensures that its utility is nonnegative by making the prices charged to the CPs to at least cover its cost. This optimization problem can be defined as follows:
\begin{equation}\label{PF1}
\begin{aligned}
& \underset{(\pi_k,\rho_k)}{\max}
& & \sum_{k\in\mathcal{C}}u_k(\rho_k(\boldsymbol{\theta}),\theta_k) \\
& \text{subject to}
& & v \geq 0.
\end{aligned}
\end{equation}
In this formulation, we do not make any constraint on the participation of the CPs. Thus, when proposing the contracts resulting from solving (\ref{PF1}), CPs may prefer not to select any of the contracts or select contracts that are not designed for their types. To analyze this economic incentive problem, next, we propose a solution based on the framework of \emph{contract theory} for designing feasible contracts \cite {borgers2015introduction}.
\section{Proposed Incentive Mechanism for Caching}
\label{form}
We consider a contract-theoretic problem in which an MNO seeks to motivate the CPs to participate and help it in the deployment of caching solutions. Due to the limited storage capacity of the SBSs as well as the limited capacity of the access links between the SBSs and the users, the allocation of storage space and power has an impact on the utilities of the CPs. In fact, the QoS achieved by the CPs' users depends on the interference from the other SBSs as shown in (\ref{first_equ}) and (\ref{second_equ}). Moreover, the allocated storage capacity to a given CP depends on the number of CPs that have signed contracts with the same MNO. For instance, the more storage is allocated to a CP $k$ the lower is the available storage capacity for other CPs. Thus, in the considered model, there is \emph{interdependence} between the signed contracts  by the different agents.

Classical contract theory models that are used to model resource allocation problems in wireless networks such as in \cite{gao2011spectrum,duan2014cooperative,zhang2015contract} cannot be applied for the analysis of caching incentive problem between CPs and an MNO defined in (\ref{PF1}). In fact, these works assume that the contract selected by a CP does not impact the utility of other CPs or focus only on models with one MNO and one CP. Thus, none of the existing works account for the interactions between the CPs. Moreover, the revelation of misleading information by a given CP in a multiple CPs model not only impacts the MNO but also impacts other CPs, which is not considered in \cite{gao2011spectrum,duan2014cooperative,zhang2015contract}.

To define the most appropriate contract for the formulated problem (\ref{PF1}), we consider the so-called truthful dominant strategy implementation. Under such contracts, the solution that maximizes the utility function of the CPs will require those CPs to reveal their private information which, in our model, pertains to the real popularity of their content. The goal of the MNO is to maximize the social welfare which effectively captures the global QoS that is experienced by the users of all CPs. Moreover, the MNO ensures that the cost of serving these users is at least covered by the price charged to the CPs. To incite the CPs to collaborate with the MNO via caching, the contract that a CP selects must be \emph{feasible} in that it satisfies the following necessary and sufficient constraints:

\begin{definition}
 Ex-post Individual Rationality (IR): \emph{The contract that a CP selects should guarantee that the utility of the CP is nonnegative for any $\boldsymbol{\theta}_{-k}$ declared by the other CPs,\vspace{-0.3cm}
\begin{multline}\label{IR}
 r_k(\rho_k(\theta_k,\boldsymbol{\theta}_{-k}),\theta_k)- \pi_{k}\geq 0, ~ \forall k \in \{1,...,K\}.
\end{multline}}
\end{definition}  

\begin{definition}
Incentive Compatibility (IC): A\emph{ contract satisfies incentive compatibility constraint if each CP of type $\theta_k$ prefers to reveal its real type $\theta_k$ rather than another type $\hat{\theta}_k$, i.e., \vspace{-0.3cm}
\begin{multline}\label{IC}
 r_k(\rho_k(\theta_k,\boldsymbol{\theta}_{-k}), \theta_k)- \pi_{k}\geq  r_k(\rho_k(\hat{\theta}_k,\boldsymbol{\theta}_{-k}),\theta_k)- \pi_{k}.
\end{multline}}
\end{definition}
\subsection{Incentive Mechanism Analysis}
The goal of the MNO is to determine a pricing policy that motivates the CPs to declare their real type and simultaneously participate in the caching system through a budget balanced mechanism, i.e., the MNO would not experience a negative utility and its effort is covered by the price charged to the CPs. To this end, the CPs need to declare their types for the MNO that in turn optimizes their utility while accounting for the necessary conditions for contracts feasibility. The optimization problem of the MNO can be defined as follows:

\small
\begin{equation}\label{PF}
\begin{aligned}
& \underset{(\pi_k,\rho_k)}{\max}
& & \sum_{k\in\mathcal{C}}u_k(\rho_k(\boldsymbol{\theta}),\theta_k) \\
& \text{subject to}
& & (\ref{IR}),(\ref{IC}),  v \geq 0.
\end{aligned}
\end{equation}
\normalsize

The solution of this problem consists in the determination of the components of a contract that consist in the allocated storage space and the price charged to each CP. The closed-form of the contract is provided by the following theorem.
\begin{theorem}
\emph{The unique efficient solution of the optimization problem (\ref{PF}) can be given by:}

\small
\begin{equation}
\rho_k^* \in \arg\max\limits_{\rho_k}\sum_{i}\left[ r_i (\rho_i(\boldsymbol{\hat{\theta}}),\hat{\theta}_i)-c_i(\boldsymbol{\hat{\theta}})\right], \forall k,
\end{equation}
\vspace{-0.3cm}
\begin{multline}\label{sol}
\pi_k(\boldsymbol{\hat{\theta}})=\underbrace{\Big[\max\limits_{\rho_i} \sum_{i\neq k} r_i(\rho_i(\boldsymbol{\hat{\theta}}_{-k}),\hat{\theta}_i)-c_i(\boldsymbol{\hat{\theta}}_{-k})\Big]}_{(a)}\\ -\Big[\underbrace{\sum_{i \neq k} r_i(\rho_i^*( \boldsymbol{\hat{\theta}}),\hat{\theta}_i)-c_i(\boldsymbol{\hat{\theta}})\Big]}_{(b)},
\end{multline}
\normalsize
\emph{where (a) represents the maximized social welfare when CP $k$ is not considered while in (b), CP $k$ is considered. Moreover, $\hat{\boldsymbol{\theta}}$ represents the revealed type by the CPs while $\boldsymbol{\theta}$ is the real type of the CPs.}
\end{theorem}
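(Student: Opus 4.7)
The plan is to identify the allocation and pricing formulas in the theorem as a Vickrey--Clarke--Groves (VCG) mechanism with the Clarke pivot payment adapted to the caching environment, and then to verify the four properties hidden in the statement: social-welfare optimality of $\rho^*$, the IC condition (\ref{IC}), the IR condition (\ref{IR}), and the MNO's budget-balance constraint $v\geq 0$. Uniqueness of the efficient contract will follow from strict concavity of $r_k$ in $\rho_k$, which makes the argmax in the allocation rule a singleton, while the form of the transfer is pinned down by the Clarke calibration required for IR.

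I would start by summing the CP utilities $u_k=r_k-\pi_k$ with the MNO utility $v=\sum_k(\pi_k-c_k)$; the transfers cancel, so the global objective is bounded above by $W(\boldsymbol{\theta}):=\max_{\rho}\sum_i[r_i(\rho,\theta_i)-c_i(\boldsymbol{\theta})]$. Consequently the welfare-maximizing $\rho_k^*$ is the only candidate allocation compatible with (\ref{PF}) once IR and budget balance hold. For IC, I would fix the reports $\hat{\boldsymbol{\theta}}_{-k}$ of the other CPs and expand CP $k$'s realized utility when it announces $\hat\theta_k$ while its true type is $\theta_k$: because the term (a) in (\ref{sol}) is independent of $\hat\theta_k$, this utility reduces, up to an $\hat\theta_k$-independent constant, to $r_k(\rho^*,\theta_k)+\sum_{i\neq k}[r_i(\rho^*,\hat\theta_i)-c_i(\hat{\boldsymbol{\theta}})]$, which when $\hat\theta_k=\theta_k$ coincides with $\max_\rho\sum_i[r_i(\rho,\theta_i)-c_i(\boldsymbol{\theta})]$ and is therefore maximal. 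This is the standard VCG externality argument and directly yields (\ref{IC}).

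For (\ref{IR}), substituting $\hat{\boldsymbol{\theta}}=\boldsymbol{\theta}$ into (\ref{sol}) gives $u_k=W(\boldsymbol{\theta})+c_k(\boldsymbol{\theta})-W_{-k}(\boldsymbol{\theta}_{-k})$, where $W_{-k}$ denotes the social welfare maximized without CP $k$. Because the feasible allocation that sets $\rho_k=0$ is admissible in the full problem and gives $r_k(0,\theta_k)=0$, we obtain $W(\boldsymbol{\theta})\geq W_{-k}(\boldsymbol{\theta}_{-k})-c_k(\boldsymbol{\theta})$, which is precisely the inequality needed. For budget balance I would use the identity $\pi_k=W_{-k}(\boldsymbol{\theta}_{-k})-W(\boldsymbol{\theta})+r_k(\rho^*,\theta_k)-c_k(\boldsymbol{\theta})$ to rewrite $v=\sum_k W_{-k}(\boldsymbol{\theta}_{-k})-(|\mathcal{C}|-1)W(\boldsymbol{\theta})-\sum_k c_k(\boldsymbol{\theta})$, whose nonnegativity I would argue from the interference-driven interdependence in (\ref{first_equ}) together with the concavity of $r_i$ and the chosen form $c(\theta)=\log(1+\theta)$, which together imply that the marginal externality inflicted by each CP on the rest at least covers its own storage cost.

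The main obstacle I expect is this budget-balance step. Plain VCG is only weakly budget balanced, and the paper's stronger claim must exploit the specific SCN structure rather than a generic mechanism-design argument. My plan is to formalize the intuition that $\rho\mapsto\sum_i r_i$ is submodular---adding a CP can only reduce the rate available to the incumbents through interference and storage contention---so that in aggregate the Clarke pivots $\sum_k\pi_k$ dominate $\sum_k c_k$; equivalently, I would try to exhibit a telescoping comparison between $W$ and the $W_{-k}$ across the type ordering $\theta_1<\cdots<\theta_K$. Once that inequality is in place, the remaining items (efficiency, IC, IR, and uniqueness) follow from textbook VCG reasoning combined with the strict concavity of $r_k$ in $\rho_k$.
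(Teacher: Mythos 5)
Your proposal takes essentially the same route as the paper: the appendix also reads the allocation rule as welfare maximization and the price (13) as a Clarke-pivot transfer, and then verifies the contract property by property --- Lemma 1/Lemma 2 are exactly your ``term (a) is independent of $\hat\theta_k$'' externality argument for incentive compatibility and dominant-strategy truthfulness, and Lemma 3 proves ex-post IR the same way you do, by observing that the restricted allocation (your $\rho_k=0$ variant, the paper's $\rho_i(\boldsymbol{\theta}_{-i})$) remains feasible in the full problem and that rates are nonnegative. Those parts of your argument are correct and, if anything, slightly more careful than the paper about where the cost terms $c_k$ sit in the IR identity $u_k = W(\boldsymbol{\theta})+c_k(\boldsymbol{\theta})-W_{-k}(\boldsymbol{\theta}_{-k})$.

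The one place you diverge is the budget-balance step, and your instinct that this is the weak link is right --- but note what the paper actually proves. Its Lemma 4 establishes only that each pivot payment is nonnegative, $\pi_i\geq 0$, by asserting that a CP's welfare contribution decreases when more CPs participate (i.e., the negative-externality/monotonicity property you were trying to derive from submodularity); it then calls the mechanism ``weakly budget-balanced'' in the standard sense $\sum_i\pi_i\geq 0$. It does \emph{not} prove the stronger constraint $v=\sum_k(\pi_k-c_k)\geq 0$ that appears in (\ref{PF}) and that you set out to verify. So you do not need your telescoping comparison across the type ordering to reproduce the paper's proof --- the paper stops at the weaker statement and takes the monotonicity as given from the system model --- but your observation that $\sum_k\pi_k\geq\sum_k c_k$ does not follow from generic VCG reasoning is a legitimate criticism of the argument as written, not a defect of your own. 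Likewise, your attribution of uniqueness to strict concavity of $r_k$ in $\rho_k$ is more explicit than the paper, which simply asserts uniqueness from Lemma 2.
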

\begin{proof}
The proof is provided in the Appendix.
\end{proof}
This result shows that, in order to determine the terms of a contract with a CP $k$, the MNO first, allocates the storage space to CP $k$ by solving the optimization problem (12). It is clear that the problem in (12) is NP-hard and thus it is challenging to find the optimal storage allocation. To solve (12), we use the framework of matching theory to analyze the assignment of storage space between the MNO that acts on behalf of its SBSs and the CPs \cite{gale1962college}. Matching theory is a suitable framework to solve NP-hard assignment problems such as in (12). As stated before, the allocated storage space to a CP $k$ depends on the allocated storage to the other CPs which is known as externalities. We solve the problem using a swap-based deferred acceptance algorithm which is guaranteed to converge to a stable outcome \cite{gale1962college}. Each CP $i$ starts by requesting from the MNO, a given storage space $\rho_i$ that maximizes $r_i(\rho_i(\boldsymbol{\hat{\theta}}_{-k}),\hat{\theta}_i)-c_i(\boldsymbol{\hat{\theta}}_{-k})$. After receiving all the requests from the CPs and based on the caching policy used by the MNO, it defines the accepted requests that maximize (12) and rejects the others. This is defined while accounting for the limited storage capacity of its SBSs. If the request of a given CP is rejected, the CP decreases the amount of the storage space it requests from the MNO. The MNO accepts new requests and rejects others based on the allocation configuration that maximizes (12). The procedure is repeated until there does not exist a CP $k$ that prefers to be assigned a given storage capacity $\rho_k$ and this allocation also maximizes (12) at the MNO.  

Once the storage space is allocated to the CPs, the price paid by a CP $k$ is found from (13), which accounts for the impact of CP $k$ on the utility of other CPs. This price represents the difference between the global utility achieved by all the CPs when CP $k$ participates in the caching process, and the global utility achieved by the CPs when CP $k$ does not participate. Note that a CP $k$ can impact other CPs utilities in two ways. The first one is through the allocated storage space. In fact, when more storage space is allocated to CP $k$, less storage is available for other CPs and thus more requests of these CPs are served via the backhaul. The second is the traffic load of CP $k$ as the transmit power of the SBSs increases by increasing the number of served requests for that CP's files. Thus, we can deduce that higher traffic load of a CP $k$ and large amounts of allocated storage to CP $k$ will result in an increase in the price charged by the MNO. The dependence of the price on the traffic load of the CPs, i.e., CPs type, is given next.

\begin{corollary}
\emph{When $\theta_k\geq\theta_l$ then we have $\pi_k\geq\pi_l$.}
\end{corollary}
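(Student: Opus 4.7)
The plan is to exploit the VCG-like structure of (13): $\pi_k$ is the externality that CP $k$ imposes on the rest of the CPs. Term (a) is the maximum social welfare attainable by $\mathcal{C}\setminus\{k\}$ when $k$ is absent, and term (b) is that same sum evaluated at the joint optimum with $k$ present, so $\pi_k$ measures exactly the loss of utility $k$ forces on the others. Reading the corollary in this light, the claim $\pi_k\ge\pi_l$ whenever $\theta_k\ge\theta_l$ amounts to saying that a CP whose type is higher imposes a larger such externality, which is the natural monotonicity statement to aim for.

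First, I would fix all declarations other than $k$'s and argue that $\pi_k$ is non-decreasing in $\theta_k$. Only term (b) depends on $\theta_k$, and two independent monotone channels drive it downwards as $\theta_k$ grows. Through the \emph{cost channel}, $c(\theta)=\log(1+\theta)$ is strictly increasing, and because the SBSs serving $k$'s content must transmit at higher power as $\theta_k$ grows, the interference $I(\rho,\boldsymbol{\theta})$ in (1) rises and $\alpha_{ij}$, hence $r_i$, drops for every other CP $i\ne k$. Through the \emph{storage channel}, the joint maximizer $\rho^*$ of (12) shifts shelf space from $\mathcal{C}\setminus\{k\}$ towards $k$ as $\theta_k$ grows, because $r_k$ is strictly concave and its marginal reward scales with the CP's traffic load; this forces more of the other CPs' requests onto the capacity-limited backhaul, again lowering the $r_i$'s.

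With monotonicity in own type established, I would conclude by a symmetry/relabelling argument on the CP indices in (13). Swapping the roles of $k$ and $l$ puts $\pi_k$ and $\pi_l$ into the same functional form up to which CP is singled out; combining this with the monotonicity result and the hypothesis $\theta_k\ge\theta_l$ then delivers the desired inequality.

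The main obstacle will be making the non-increasingness of term (b) in $\theta_k$ genuinely rigorous, given that (12) is NP-hard and is only solved by the swap-based deferred-acceptance procedure of Section~III. Rather than attempting a closed-form comparison of $\rho^*(\theta_k)$ and $\rho^*(\theta_k')$, I would apply an envelope argument: since each map $\theta_k\mapsto r_i(\rho_i,\hat\theta_i)$ and $\theta_k\mapsto -c_i(\boldsymbol{\hat\theta})$ is non-increasing for $i\ne k$ by the two channels above, monotone comparative statics on the inner maximum in (b) yield the required one-sided bound without knowing $\rho^*$ explicitly, which is exactly what is needed to sign $\pi_k-\pi_l$.
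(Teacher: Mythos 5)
Your reading of (13) as a VCG-style externality payment, and the two channels you identify (extra interference from the higher transmit power needed to serve a high-traffic CP, and storage crowding-out at the joint optimum), are precisely the ingredients the paper itself invokes: its proof of the corollary is a one-sentence appeal to ``the monotonicity property of the rate function and the structure of (13)'', and your two channels are exactly the two effects described in the prose following Theorem 1. So the core of your argument is the same as the paper's. One small slip: the cost $c_k(\theta_k)$ of CP $k$ itself appears in neither term (a) nor term (b) of (13) --- both sums run over $i\neq k$ --- so the ``cost channel'' as you state it does not act on $\pi_k$ at all; only the interference and storage effects on the \emph{other} CPs' welfare matter.

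The genuine gap is in your concluding relabelling step. The corollary compares two \emph{different} CPs $k$ and $l$ at the \emph{same} declared profile, not one CP at two values of its own type. Writing $\pi_k=(a_k)-(b_k)$ and $\pi_l=(a_l)-(b_l)$, the counterfactual terms $(a_k)$ and $(a_l)$ are maxima over different coalitions, $\mathcal{C}\setminus\{k\}$ (which contains $l$) and $\mathcal{C}\setminus\{l\}$ (which contains $k$), so they neither cancel nor are obviously ordered; meanwhile $(b_k)-(b_l)=(r_l-c_l)-(r_k-c_k)$ evaluated at the joint optimum, so that $\pi_k-\pi_l=\bigl[(a_k)-(a_l)\bigr]+\bigl[(r_k-c_k)-(r_l-c_l)\bigr]$. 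Your proposed reduction --- swap the roles of $k$ and $l$ and invoke own-type monotonicity --- changes two type coordinates simultaneously: it raises $k$'s type from $\theta_l$ to $\theta_k$ (which your envelope argument handles) but also lowers $l$'s type from $\theta_k$ to $\theta_l$, and the effect of that second change on $\pi_k$ is ambiguous because it pushes terms (a) and (b) in the same direction. Closing the argument requires either an exchangeability assumption together with a separate ordering of the two counterfactual welfares, or a direct sign argument on the displayed identity. To be fair, the paper supplies neither --- its proof is an assertion --- but your write-up should not present the relabelling as if it followed from own-type monotonicity alone.
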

\begin{proof}
This results follows directly from the monotonicity property of the rate function and the structure of (13).
\end{proof}
\vspace{-0.3cm}
\section{Simulation Results}
\label{sim}
\vspace{-0.1cm}

\begin{figure*}[t]
\subfloat[ \label{res1}]{%
	\includegraphics[width=0.33\linewidth]{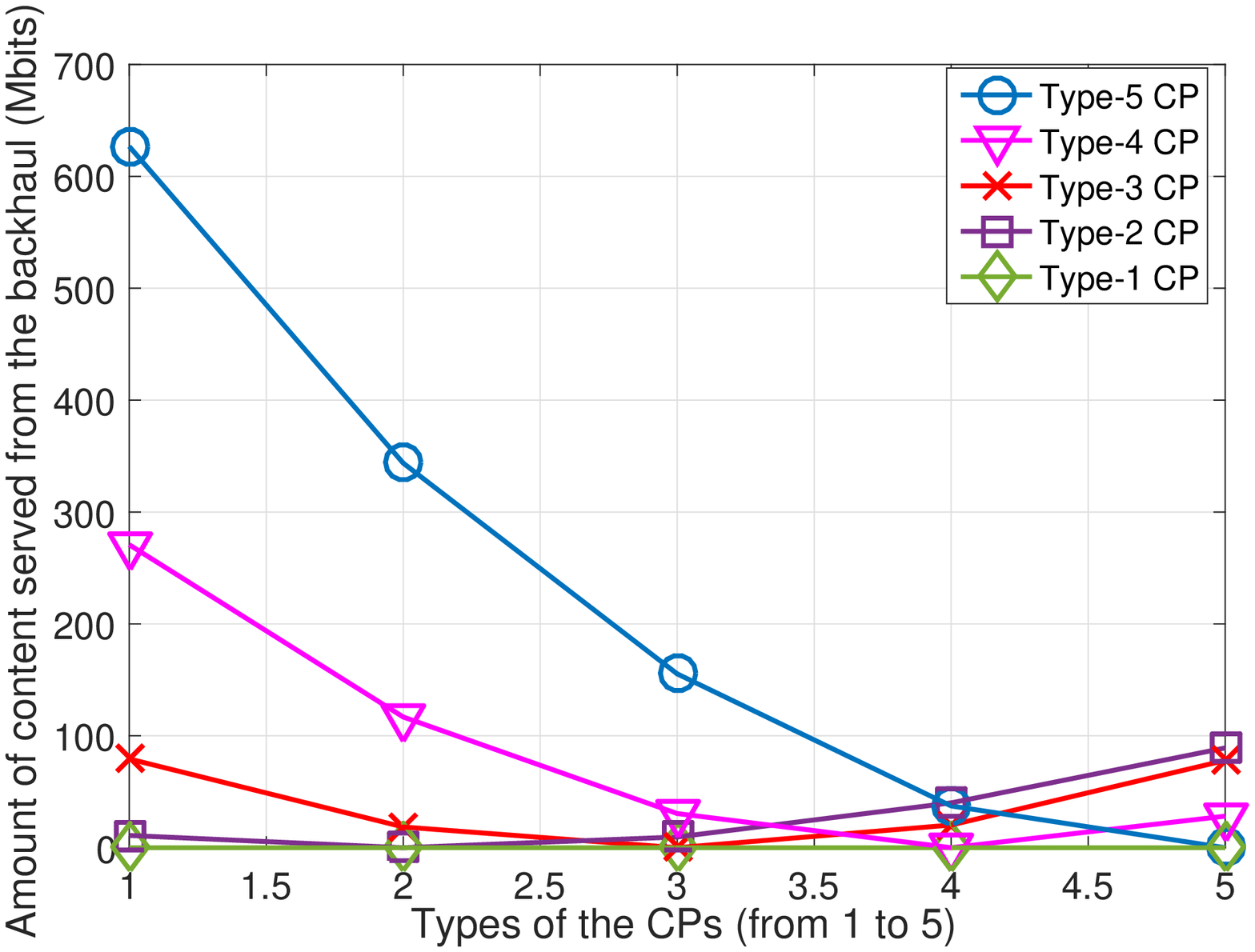}
}
\subfloat[ \label{res2}]{%
	\includegraphics[width=0.33\textwidth]{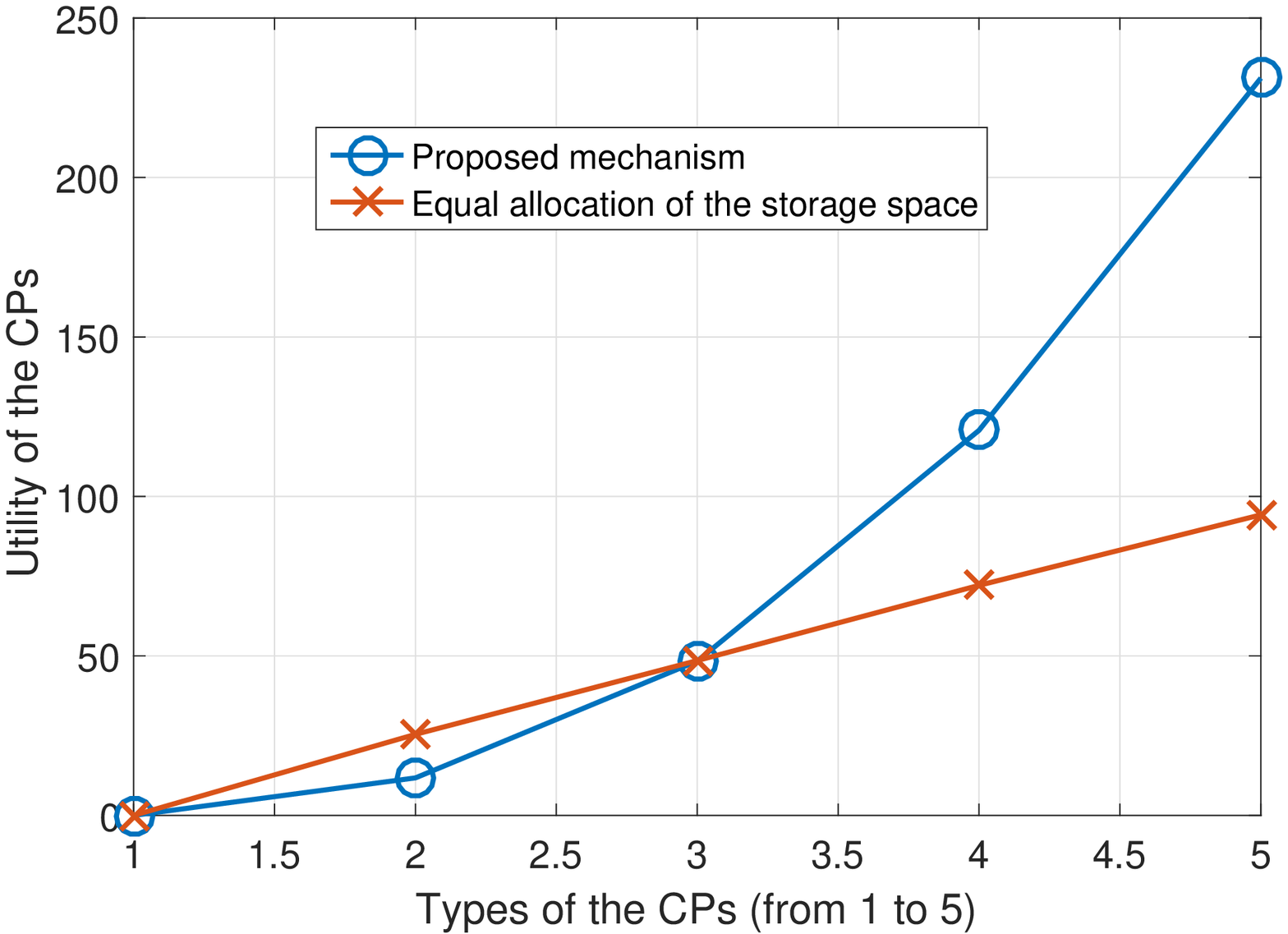}
}
\subfloat[ \label{res3}]{%
	\includegraphics[width=0.33\linewidth]{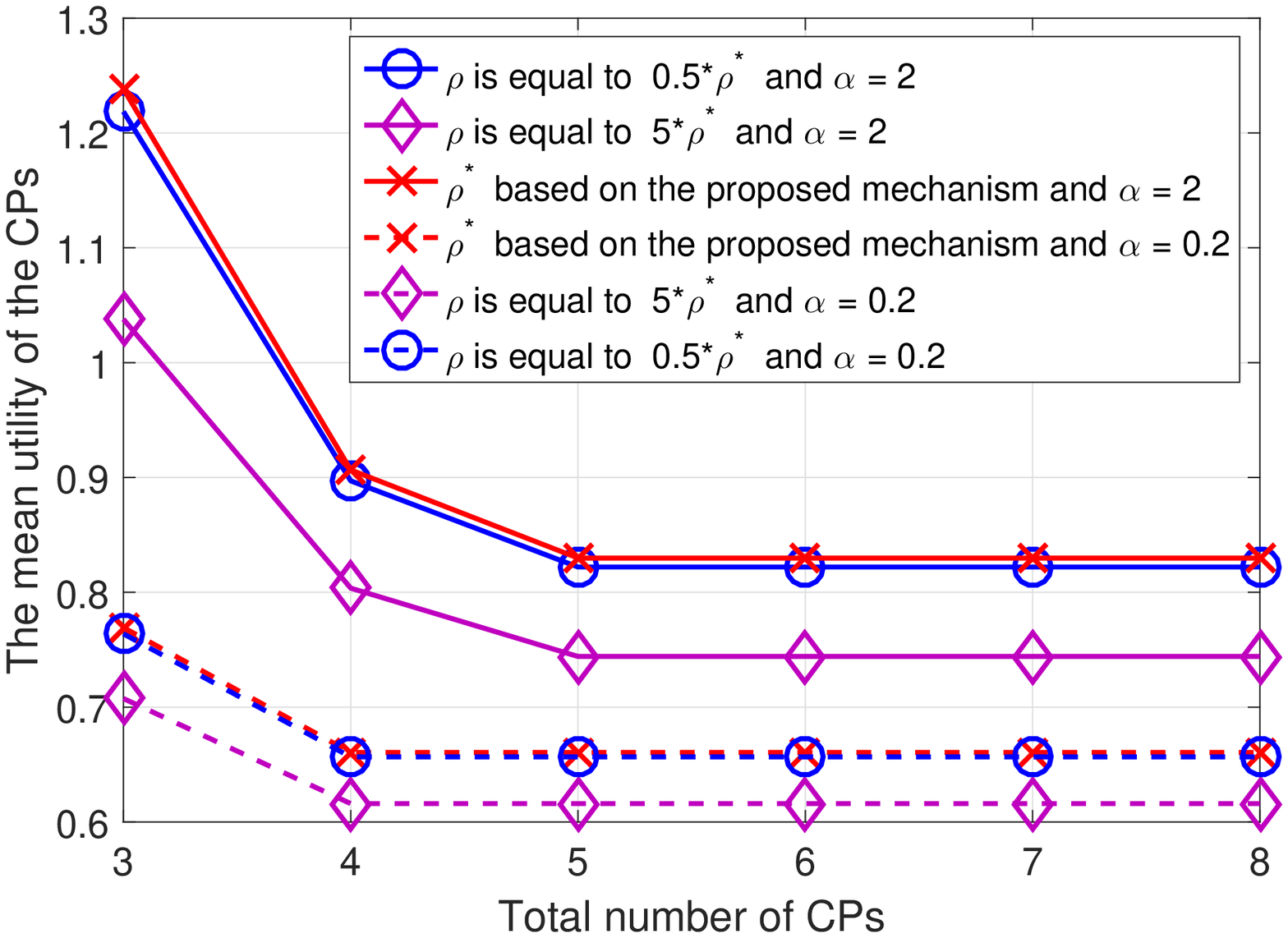}
}
\caption{\small Numerical results for a) the amount of content served via the backhaul with respect to each CP's type, b) the utility of the CPs as a function of the CPs' type and the used storage allocation model, c) the mean utility of a CP with respect to the total number of CPs and the popularity distribution of the files. \vspace{-0.5cm}}
\label{res}
\end{figure*}

For our simulations, we consider five CPs with different traffic load levels from 1 to 5 with type 5 being the highest load. A type-1 CP is chosen with no traffic and is used as a baseline to compare the performance of our mechanism with the case in which there is no caching. We consider a set of 100 files whose popularity follows a Zipf distribution of parameter $\alpha=0.2$. The MNO has one MBS that serves all the requests that cannot be served from the SBSs' cache. The number of SBSs is 10 and the total storage capacity of the SBSs is 1 Gbits. The transmit power of the SBSs is 1 W and the bandwidth capacity to 100 MHz. 

In Fig. \ref{res1}, we show that the amount of content that is served via the backhaul for every CPs' users when a CP selects the contracts designed by the MNO for each type. We account for the fact that the price that can be paid by each CP is limited and the limit increases by increasing the type of the CPs. From Fig. \ref{res1}, we can observe that when high type CPs select contracts that are designed for CPs with lower traffic load, the amount of content that is served via the backhaul increases until it reaches the maximum which corresponds to the lowest type contract. Similarly, when low type CPs select the contracts designed for higher types, the amount of traffic served from the backhaul increases. This is due to the high charged price by the MNO for high type CPs and thus CPs of lower type cannot afford that which results in a larger amount of served content from the backhaul. Thus, this result validates the fact the proposed approach for cache incentive compatible as is forces each CP to choose the contract designed for its own type.


In Fig. \ref{res2}, we compare the case in which all the CPs choose the contracts defined for their corresponding types by the MNO and the case in which the storage space is allocated equally by the MNO for all the CPs. Fig. \ref{res2} show that the model of equal storage allocation outperforms the proposed mechanism for low type CPs. In this case, the allocated storage space for the CPs is higher than the required storage by the CPs which appears through the utilities of type-2 CP and type-3 CP that are only 2\% to 10\% higher than their utilities when following the proposed mechanism. On the other hand, the proposed mechanism outperforms the equal storage space allocation model for high type CPs. In fact, the utility of high level types is higher when selecting the offered contracts by the MNO for their specific types. The utilities of type-4 and type-5 CPs are 50\% to 140\% higher than their achievable utilities in the model of equally allocated storage space. The proposed mechanism is more beneficial for the CPs as it allows the MNO to offer to all CPs only the amount of storage space they need. In contrast, when using the equal allocation approach, the MNO allocates to the low type CPs more than their storage requirements and insufficient space for high level type CPs.

In Fig. \ref{res3}, we show the variation of the mean utility of the CPs when increasing the total number of CPs. We consider two different values for the parameter $\alpha$, $0.2$ which corresponds to the case in which the files have comparable popularity, and $2$ for the case in which some files are very popular while others have a very low popularity. We can observe that the mean utility for the CPs decreases by increasing the total number of CPs in the model. This is due to the data rate function that depends on the additional interference from the added CPs and the decrease of the available storage space that can be allocated to each CP. The achievable utility by a CP is up to 20\% larger compared to the cases in which the CPs request a storage capacity that is larger or lower than the offered one by the MNO. Moreover, we can see that the popularity of the files impacts the mean utility of the CPs. In fact, the CPs can achieve a larger utility when the distribution of the popularity of the files is steep and thus by caching a file, a large amount of the requests can be served from the cache of the SBSs.
\vspace{-0.1cm}
\section{Conclusion}
\label{conc}
\vspace{-0.1cm}
In this paper, we have proposed a new incentive framework to motivate the CPs to cooperate with an MNO and cache their content at the MNO's SBSs. Based on contract theory, we have designed an incentive mechanism that allows the MNOs to offer a contract for each CP in which it sets the allocated storage for the CP and the charged price by the MNO for the caching service. This model accounts for both asymmetry of information and the interdependence between the different contracts. We have then derived the optimal pricing mechanisms and contracts that motivate the CPs to cache their content and reveal their private information. Simulations have shown the effectiveness of the proposed approach in inciting the participation of CPs in caching.

\begin{appendix}
\vspace{-0.cm}
We prove Theorem 1 by showing the following Lemmas.
\begin{lemma}
\emph{The proposed mechanism is incentive compatible.}
\end{lemma}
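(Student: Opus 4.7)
The plan is to recognize that the mechanism in Theorem~1 is the Clarke-pivot instance of the Vickrey--Clarke--Groves (VCG) construction, so incentive compatibility will follow from the standard social-welfare-alignment argument. The approach has three steps: substitute the payment (13) into CP $k$'s realized utility, identify the component that is independent of the report $\hat{\theta}_k$, and then invoke the optimality property of the allocation rule (12) to conclude that truth-telling is best.

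First, I would write CP $k$'s utility when it reports an arbitrary $\hat{\theta}_k$ while the other CPs report $\hat{\theta}_{-k}$, keeping the \emph{true} type $\theta_k$ inside the rate function since the rate actually experienced depends on physically transmitted powers:
\[
u_k = r_k(\rho_k^*(\hat{\theta}),\theta_k) + \sum_{i\neq k}\bigl[r_i(\rho_i^*(\hat{\theta}),\hat{\theta}_i) - c_i(\hat{\theta})\bigr] - \max_{\rho}\sum_{i\neq k}\bigl[r_i(\rho_i(\hat{\theta}_{-k}),\hat{\theta}_i) - c_i(\hat{\theta}_{-k})\bigr].
\]
The last term is a function of $\hat{\theta}_{-k}$ only and is therefore a constant from CP $k$'s viewpoint, so the problem of maximizing $u_k$ over $\hat{\theta}_k$ reduces to maximizing
\[
\Phi_k(\hat{\theta}_k) := r_k(\rho_k^*(\hat{\theta}),\theta_k) + \sum_{i\neq k}\bigl[r_i(\rho_i^*(\hat{\theta}),\hat{\theta}_i) - c_i(\hat{\theta})\bigr].
\]

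Next I would invoke the allocation rule (12), which states that $\rho^*(\hat{\theta}) \in \arg\max_\rho \sum_i [r_i(\rho_i,\hat{\theta}_i) - c_i(\hat{\theta})]$. When CP $k$ is truthful, i.e., $\hat{\theta}_k = \theta_k$, the $i = k$ summand becomes $r_k(\rho_k,\theta_k) - c_k(\theta_k)$, so the objective that (12) maximizes coincides with $\Phi_k(\theta_k)$ up to the additive constant $c_k(\theta_k)$ that does not depend on $\rho$. Hence the truthful allocation achieves the maximum of $\Phi_k$, whereas any misreport $\hat{\theta}_k \neq \theta_k$ produces an allocation that optimizes a perturbed objective and therefore cannot improve $\Phi_k$. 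This yields $\Phi_k(\theta_k) \geq \Phi_k(\hat{\theta}_k)$ for every $\hat{\theta}_k$, which is exactly the IC inequality (10).

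The main obstacle I anticipate is careful bookkeeping of the interdependence: the realized rate $r_k(\rho,\theta_k)$ carries the true type through the transmit powers and the interference term $I(\rho,\boldsymbol{\theta})$, while the allocation $\rho^*(\hat{\theta})$ is planned from the reports. The crux of the verification is that under truthful reporting the expression maximized by (12) agrees with $\Phi_k(\theta_k)$ up to a $\rho$-independent constant; once this is checked, the VCG alignment argument closes the proof. A minor point is the interpretation of $c_i(\hat{\theta})$---whether it depends on the full profile or only on $\hat{\theta}_i$---but in either reading the terms that involve $\hat{\theta}_k$ aggregate into $\Phi_k$ as written above, so the argument goes through unchanged.
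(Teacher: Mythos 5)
Your proof is correct and follows essentially the same route as the paper: substitute the payment (13) into the CP's utility, discard the term that is independent of the report, and invoke the efficiency of the allocation rule (12). The paper phrases its Lemma~1 as a proof by contradiction, but the substance is identical, and your direct version is in fact almost exactly the argument the paper itself uses in its Lemma~2 (truth-telling as a dominant strategy).
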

\begin{proof}
We show this result by contradiction. Suppose that $\rho$ is an efficient decision rule but $(\rho,\pi)$ is not dominant strategy incentive compatible. Then, there exists $i, \boldsymbol{\theta},$ and $\boldsymbol{\hat{\theta}}$ such that:
\begin{equation*}
r_i(\rho_i(\hat{\theta}_i,\boldsymbol{\theta}_{-i}),\theta_i)-\pi_i(\boldsymbol{\theta}_{-i},\hat{\theta}_i)>r_i(\rho_i(\boldsymbol{\theta}),\theta_i)-\pi_i(\boldsymbol{\theta}).
\end{equation*}
From (\ref{sol}), this implies that
\begin{multline*}
r_i(\rho_i(\hat{\theta}_i,\boldsymbol{\theta}_{-i}),\theta_i)-(a)>r_i(\rho_i(\boldsymbol{\theta}),\theta_i)-(b),
\end{multline*}
which is equivalent to

\vspace{-0.5cm}
\small
\begin{multline*}
r_i(\rho_i(\hat{\theta}_i,\boldsymbol{\theta}_{-i}),\theta_i)-\sum_{j\neq i}r_j(\rho_j(\hat{\theta}_i,\boldsymbol{\theta}_{-i}),\theta_j)-c_i(\hat{\theta}_i,\boldsymbol{\theta}_{-i})>\\r_i(\rho_i(\boldsymbol{\theta}),\theta_i)-\sum_{j\neq i}r_j(\rho_j(\theta_i,\boldsymbol{\theta}_{-i}),\theta_j)-c_i(\theta_i,\boldsymbol{\theta}_{-i}).
\end{multline*}
\normalsize
This contradicts the efficiency of $\boldsymbol{\rho}$ based on (12) and thus, the assumption was incorrect.
\end{proof}

\begin{lemma}
\emph{Truth telling is a dominant strategy under (\ref{sol}).}
\end{lemma}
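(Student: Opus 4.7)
The plan is to apply the standard VCG-style pivot argument to the payment rule in (\ref{sol}). Fix an arbitrary CP $i$ with true type $\theta_i$ and an arbitrary report profile $\boldsymbol{\theta}_{-i}$ from the remaining CPs. I will show that $\hat{\theta}_i=\theta_i$ maximizes $u_i$ over all possible misreports $\hat{\theta}_i$, \emph{regardless} of $\boldsymbol{\theta}_{-i}$. Since this conclusion holds for every CP and every profile of others, truth telling is a dominant strategy.

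The key algebraic step is to expand $u_i(\hat{\theta}_i;\theta_i,\boldsymbol{\theta}_{-i}) = r_i(\rho_i^{*}(\hat{\theta}_i,\boldsymbol{\theta}_{-i}),\theta_i) - \pi_i(\hat{\theta}_i,\boldsymbol{\theta}_{-i})$ and substitute (\ref{sol}). The price decomposes into term (a), which is the maximum of $\sum_{j\neq i}[r_j(\rho_j,\hat{\theta}_j)-c_j(\cdot)]$ computed over $\boldsymbol{\theta}_{-i}$ only, and term (b). Because (a) depends solely on $\boldsymbol{\theta}_{-i}$, it is a constant with respect to the report $\hat{\theta}_i$. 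Consequently, maximizing $u_i$ over $\hat{\theta}_i$ is equivalent to maximizing the quantity $r_i(\rho_i^{*}(\hat{\theta}_i,\boldsymbol{\theta}_{-i}),\theta_i) + \sum_{j\neq i}\bigl[r_j(\rho_j^{*}(\hat{\theta}_i,\boldsymbol{\theta}_{-i}),\hat{\theta}_j) - c_j(\hat{\theta}_i,\boldsymbol{\theta}_{-i})\bigr]$, which is exactly the social-welfare objective of (12) evaluated at the \emph{true} valuation $\theta_i$ of CP $i$ while $\rho^{*}$ is the allocation the MNO chooses as if CP $i$ had type $\hat{\theta}_i$.

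At this point I invoke the efficiency property (12). By definition, $\rho^{*}(\theta_i,\boldsymbol{\theta}_{-i})$ is the argmax of precisely this social-welfare objective when the first argument equals $\theta_i$. Therefore any report $\hat{\theta}_i\neq\theta_i$ induces an allocation $\rho^{*}(\hat{\theta}_i,\boldsymbol{\theta}_{-i})$ that is, at best, weakly inferior on this objective, so $u_i(\hat{\theta}_i;\theta_i,\boldsymbol{\theta}_{-i}) \leq u_i(\theta_i;\theta_i,\boldsymbol{\theta}_{-i})$. Because this holds for every $\boldsymbol{\theta}_{-i}$, the truthful report $\hat{\theta}_i=\theta_i$ is a dominant strategy for CP $i$, and by symmetry for every CP.

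The main obstacle I anticipate is the dependence of $c_j$ on the entire reported profile rather than on $\hat{\theta}_j$ alone, which prevents a naive agent-by-agent separation of the welfare objective. The plan sidesteps this by bundling the costs into the same social-welfare expression that (12) maximizes, rather than treating them as individual valuations; this is why the efficiency property in (12) is invoked directly on the joint maximand. A secondary concern is relating this lemma to Lemma 1, which already established incentive compatibility: the present argument strengthens that result by fixing an \emph{arbitrary} $\boldsymbol{\theta}_{-i}$ (possibly untruthful) rather than the true profile, yielding dominant-strategy implementation instead of merely truthful equilibrium.
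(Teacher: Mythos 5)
Your proof is correct and follows essentially the same route as the paper's: you substitute the pivot payment (13), observe that term (a) is independent of CP $i$'s report, and reduce the CP's problem to maximizing the social-welfare objective of (12), which the efficient allocation rule solves precisely when $\hat{\theta}_i=\theta_i$. If anything, your write-up is slightly more careful than the paper's, since you keep the true type $\theta_i$ in CP $i$'s own valuation term and explicitly quantify over an arbitrary (possibly untruthful) $\boldsymbol{\theta}_{-i}$ to obtain the dominant-strategy conclusion.
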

\begin{proof}
Consider the problem of choosing the best type $\hat{\theta}_i$ by a CP $i$. A best strategy for CP $i$ solves 
\begin{equation*}
\begin{aligned}
& \underset{\hat{\theta}_i}{\max}
& & r_i(\rho_i(\hat{\boldsymbol{\theta}}),\hat{\theta}_i)-\pi_i(\hat{\boldsymbol{\theta}}).
\end{aligned}
\end{equation*}
Substituting the payment function by the proposed mechanism (12), we get

\vspace{-0.5cm}
\small
\begin{multline*}
 \underset{\hat{\theta}_i}{\max} \Big[r_i(\rho_i(\hat{\boldsymbol{\theta}}),\hat{\theta}_i)-\underbrace{\sum_{j\neq i} r_j(\rho(\hat{\boldsymbol{\theta}}_{-i}),\theta_{j})-c_j(\theta_j,\hat{\boldsymbol{\theta}}_{-i})}_{(a)}\\+\sum_{j\neq i}r_j(\rho_j(\hat{\boldsymbol{\theta}}),\hat{\theta}_j)\Big].
\end{multline*}
\normalsize
Since $(a)$ does not depend on $\hat{\theta}_i$, it is sufficient to solve 

\vspace{-0.3cm}
\small
\begin{equation*}
\begin{aligned}
& \underset{\hat{\theta}_i}{\max}
& & \bigg(r_i(\rho_i(\hat{\boldsymbol{\theta}}),\hat{\theta}_i)+\sum_{j\neq i}r_j(\rho_j(\hat{\boldsymbol{\theta}}),\hat{\theta}_j)\bigg).
\end{aligned}
\end{equation*}
\normalsize
Thus, CP $i$ would pick a declaration $\hat{\theta}_i$ that will lead the mechanism to pick a $\boldsymbol{\rho}$ which solves

\vspace{-0.3cm}
\small
\begin{equation}\label{eq}
\begin{aligned}
& \underset{\boldsymbol{\rho}}{\max}
& & \bigg(r_i(\rho_i(\hat{\boldsymbol{\theta}}),\hat{\theta}_i)+\sum_{j\neq i}r_j(\rho_j(\hat{\boldsymbol{\theta}}_{-i}),\hat{\theta}_j)\bigg).
\end{aligned}
\end{equation}
\normalsize
Under the proposed mechanism,

\vspace{-0.3cm}
\small
\begin{equation*} 
\boldsymbol{\rho}^*\in \arg\max\limits_{\boldsymbol{\rho}}~~\bigg(r_i(\rho_i(\hat{\boldsymbol{\theta}}),\theta_i)+\sum_{i\neq j}r_j(\rho_j(\hat{\boldsymbol{\theta}}),\hat{\theta}_j)\bigg).
\end{equation*}
\normalsize
The proposed mechanism (\ref{sol}) will choose $\rho$ in a way that solves the maximization problem (\ref{eq}) with $\hat{\theta}_i=\theta_i$. Thus, truth-telling is a dominant strategy for CP $i$.
\end{proof}

\begin{lemma}
\emph{The proposed mechanism (\ref{sol}) is ex-post individually rational.}
\end{lemma}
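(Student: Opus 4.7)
The plan is to recognize that the pricing rule in (\ref{sol}) is a Clarke pivot payment, so ex-post IR will follow from the standard VCG-style argument together with the efficiency of the allocation $\rho^*$ established in (12). First I would fix a CP $k$ and an arbitrary profile $\boldsymbol{\theta}_{-k}$ of the other CPs' declarations, and write the utility of CP $k$ under truth-telling as $u_k = r_k(\rho_k^*(\boldsymbol{\theta}),\theta_k) - \pi_k(\boldsymbol{\theta})$; the task is to bound this quantity below by zero for every such $\boldsymbol{\theta}_{-k}$.

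Substituting the price from (\ref{sol}) and regrouping terms gives
\begin{equation*}
u_k = \Big[r_k(\rho_k^*(\boldsymbol{\theta}),\theta_k) + \sum_{i\neq k}\big(r_i(\rho_i^*(\boldsymbol{\theta}),\theta_i) - c_i(\boldsymbol{\theta})\big)\Big] - \max_{\rho}\sum_{i\neq k}\big(r_i(\rho_i,\theta_i) - c_i(\boldsymbol{\theta}_{-k})\big).
\end{equation*}
The next step is to invoke the efficiency of $\rho^*(\boldsymbol{\theta})$, which by (12) maximizes the full social welfare $\sum_i[r_i(\rho_i,\theta_i)-c_i(\boldsymbol{\theta})]$. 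Comparing $\rho^*(\boldsymbol{\theta})$ with the trial allocation that assigns $k$ zero storage while letting the other CPs receive their jointly optimal allocation, and using $r_k(0)=0$ from the valuation-function assumptions, the bracketed quantity on the left cannot be smaller than the maximized welfare of $\mathcal{C}\setminus k$ alone, which is the very term being subtracted on the right.

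The step I expect to be the main obstacle is the bookkeeping of the cost terms: on the left each $c_i$ is evaluated at the full profile $\boldsymbol{\theta}$, whereas on the right it is evaluated at $\boldsymbol{\theta}_{-k}$. Two clean routes bridge this gap: either appealing to the explicit model $c(\theta)=\log(1+\theta)$ introduced in the system model, under which each $c_i$ depends only on its own type so that $c_i(\boldsymbol{\theta}) = c_i(\boldsymbol{\theta}_{-k})$ for every $i\neq k$; or, more generally, a monotonicity assumption ensuring that adding CP $k$ does not decrease $r_i - c_i$ for the remaining CPs. Once this cost-matching reduction is in place, the VCG-style efficiency inequality delivers $u_k \geq 0$ for every realization of the declared $\boldsymbol{\theta}_{-k}$, which is precisely the ex-post IR condition (\ref{IR}).
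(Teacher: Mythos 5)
Your proof is correct and follows essentially the same route as the paper's: both express $u_k$ as the difference between the full optimal social welfare and the maximized welfare of $\mathcal{C}\setminus k$, then apply the standard VCG/Clarke-pivot argument that the reduced allocation (with $k$ receiving nothing) is feasible in the full problem and $k$'s valuation is nonnegative ($r_k(0)=0$ in your version, $r_j(\cdot)\geq 0$ in the paper's). Your explicit handling of the cost terms $c_i(\boldsymbol{\theta})$ versus $c_i(\boldsymbol{\theta}_{-k})$ is a welcome refinement of a point the paper glosses over silently, but it does not change the substance of the argument.
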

\begin{proof}
At the equilibrium, all the CPs are truthful and declare their real types. Thus, by replacing (\ref{sol}) in the utility of a CP $i$, we have

\vspace{-0.3cm}
\small
\begin{multline}\label{IRP}
u_i=\sum_{i} r_i(\rho_i^*(\boldsymbol{\theta}),\theta_i)-c_i(\boldsymbol{\theta})-\sum_{j\neq i}r_j(\rho_j^*(\boldsymbol{\theta}_{-i}),\theta_j)-c_j(\boldsymbol{\theta}),
\end{multline}
\normalsize
where $\boldsymbol{\rho}^*$ is the outcome that maximizes the social welfare. The CPs could have picked $\rho_i(\boldsymbol{\theta}_{-i})$ instead of $\rho_i(\boldsymbol{\theta})$ as a solution of the optimization problem, as it is one of the possible strategies. This is possible because the set of strategies, i.e., the total storage capacity of the MNO, is fixed and does not change by changing the set of participating CPs. Thus, 

\vspace{-0.2cm}
\small
\begin{equation*}
\sum_{j}r_j(\rho_j^*(\boldsymbol{\theta}),\theta_j)\geq \sum_{j}r_j(\rho_j^*(\boldsymbol{\theta}_{-i}),\theta_j).
\end{equation*}
\normalsize
Furthermore, we know that the rate of a participating CP cannot be negative, i.e.,

\vspace{-0.4cm}
\small
\begin{equation*}
r_j(\rho_j^*(\boldsymbol{\theta}_{-i}),\theta_j)\geq 0.
\end{equation*}
\normalsize
Therefore,

\vspace{-0.3cm}
\small
\begin{equation*}
\sum_{i} r_i(\rho_i^*(\boldsymbol{\theta}),\theta_i)\geq\sum_{j\neq i}r_j(\rho_j^*(\boldsymbol{\theta}_{-i}),\theta_j).
\end{equation*}
\normalsize
Thus, (\ref{IRP}) is non-negative and the proposed mechanism is ex-post individual rational.
\end{proof}

\begin{lemma}
\emph{The proposed mechanism (\ref{sol}) is weakly budget-balanced.}
\end{lemma}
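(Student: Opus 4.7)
The plan for proving weak budget balance is to establish the pointwise inequality $\pi_k(\boldsymbol{\hat{\theta}}) \geq 0$ for every $k \in \mathcal{C}$, from which $\sum_{k \in \mathcal{C}} \pi_k \geq 0$ (the defining property of a weakly budget-balanced mechanism) follows by summation. This is essentially the classical Clarke-pivot argument adapted to the storage-and-interference coupling of Section \ref{model}, and it complements Lemma 3, which already showed that the non-negative valuation side of each CP's utility dominates the payment.

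The key observation is that term (a) in (\ref{sol}) is a \emph{maximum} over allocations, whereas term (b) evaluates the same summand at one specific allocation. I would exploit this by plugging into (a) the allocation $\tilde{\rho}$ defined by $\tilde{\rho}_i = \rho_i^*(\boldsymbol{\hat{\theta}})$ for every $i \neq k$. Since $\sum_{i \neq k} \rho_i^*(\boldsymbol{\hat{\theta}}) \leq \sum_i \rho_i^*(\boldsymbol{\hat{\theta}})$ respects the total storage budget, $\tilde{\rho}$ is feasible for the (a)-problem in which CP $k$ is absent from the type profile, so $(a) \geq \sum_{i \neq k}\bigl[r_i(\tilde{\rho}_i, \hat{\theta}_i) - c_i(\boldsymbol{\hat{\theta}}_{-k})\bigr]$.

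The next step is to compare this lower bound with (b). The two summations differ only in the type profile on which the rate and cost functions are evaluated: $\boldsymbol{\hat{\theta}}_{-k}$ on one side versus $\boldsymbol{\hat{\theta}}$ on the other. Section \ref{model} explicitly argues, through the interference term $I(\rho, \boldsymbol{\theta}) = \sum_{l \in \mathcal{S}\setminus i} p_{lj}(\rho, \boldsymbol{\theta}) |h_{lj}|^2$ in (\ref{first_equ}), that the activity of additional CPs can only inflate the interference experienced by the remaining users and therefore cannot increase $r_i$ for $i \neq k$. Combined with the fact that $c_i$ depends on CP $i$'s own type through $c(\theta_i) = \log(1+\theta_i)$ and is thus unaffected by whether $k$ is present, this yields $\sum_{i \neq k}\bigl[r_i(\tilde{\rho}_i, \hat{\theta}_i) - c_i(\boldsymbol{\hat{\theta}}_{-k})\bigr] \geq (b)$. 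Chaining the two inequalities gives $\pi_k = (a) - (b) \geq 0$, and summation over $k$ closes the argument.

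The main obstacle, and essentially the only non-routine step, is justifying the monotonicity comparison in the previous paragraph: that removing CP $k$ from the active set only improves (weakly) the rate-minus-cost of the remaining CPs. In a pure storage-only setting this would be immediate, since $\tilde{\rho}$ uses no more storage than what is freed up; here, however, the coupling runs through both the interference sum and the transmit-power allocations $p_{ij}(\rho, \boldsymbol{\theta})$, which depend on the full type vector. A clean invocation of this externality monotonicity, rather than any direct calculation on the closed-form (\ref{sol}), is what the argument really hinges on.
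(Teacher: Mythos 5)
Your proposal is correct and follows essentially the same route as the paper: both establish weak budget balance by showing each individual payment $\pi_k = (a)-(b) \geq 0$ and summing, with the crux being that removing CP $k$ weakly improves the welfare of the remaining CPs. Your two-step decomposition (feasibility of the restricted allocation in the (a)-maximization, then monotonicity of rates in the type profile via the interference term) just makes explicit, and honestly flags as the load-bearing assumption, what the paper asserts in a single line as ``the utility of a CP is a decreasing function of the number of participating CPs.''
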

\begin{proof}
Since the CPs are truth-telling at the equilibrium then we have

\vspace{-0.6cm}
\small
\begin{multline*}
\sum_{i} \pi_i(\boldsymbol{\theta})= \sum_{i}\bigg(\Big[\sum_{j\neq i}r_j(\rho_j(\boldsymbol{\theta}_{-i}),\theta_j)-c_j(\boldsymbol{\theta}_{-i},\theta_j)\Big]\\-\Big[\sum_{j\neq i}r_j(\rho_j(\boldsymbol{\theta}),\theta_j)-c_j(\boldsymbol{\theta},\theta_j)\Big]\bigg).
\end{multline*}
\normalsize
Moreover, since the utility of a CP is a decreasing function of the number of the set of participating CPs, we have that, $\forall i$,

\vspace{-0.3cm}
\small
\begin{equation*}
\sum_{j\neq i}r_j(\rho_j(\boldsymbol{\theta}_{-i}),\theta_j)-c_j(\boldsymbol{\theta}_{-i},\theta_j)\geq\sum_{j\neq i}r_j(\rho_j(\boldsymbol{\theta}),\theta_j)-c_j(\boldsymbol{\theta},\theta_j).
\end{equation*}
\normalsize
Thus, the proposed mechanism is weakly budget-balanced.
\end{proof}
Next, we prove the provided result in Theorem 1.
\begin{proof}
Based on Lemma 1, Lemma 3 and Lemma 4, we can deduce that all the condition of the optimization problem (\ref{PF}) are satisfied. Based on Lemma 2, we can deduce that the proposed mechanism (\ref{sol}) is the unique efficient solution of the formulated problem (\ref{PF}).
\end{proof}
\end{appendix}
\scriptsize 
\bibliographystyle{IEEEtran}
\bibliography{references}

\end{document}